	\newtheorem{assumption}{\textbf{Assumption}}
	\newtheorem{definition}{\textbf{Definition}}
	\newtheorem{theorem}{\textbf{Theorem}}
	\newtheorem{remark}{\textbf{Remark}}
	\newtheorem{problem}{\textbf{Problem}}
\newcommand{\T}{^{\mbox{\tiny T}}}
\newcommand{\R}{\mathbb{R}}
\newcommand{\eps}{\varepsilon}
\let\leq\leqslant
\let\geq\geqslant
\newenvironment{proof}[1][Proof]%
{\par\noindent\textit{#1:\ }}%
{\hspace*{\fill} \rule{6pt}{6pt}}
\newenvironment{proof*}[1][Proof]%
{\par\noindent\textit{#1:\ }}{}
\DeclareMathOperator{\diag}{diag}
\DeclareMathOperator{\rank}{rank}
\newenvironment{system}[1]%
{\setlength{\arraycolsep}{0.5mm}\left\{ \; \begin{array}{#1}}%
	{\end{array} \right.}
\newenvironment{system*}[1]%
{\setlength{\arraycolsep}{0.5mm} \begin{array}{#1}}%
	{\end{array}}
\begin{document}

	\title{$H_\infty$ Almost Output and Regulated Output Synchronization of Heterogeneous Multi-agent Systems: A Scale-free Protocol Design}

	\author{Donya Nojavanzadeh\aref{WSU},
	Zhenwei Liu\aref{ZW},
	Ali Saberi\aref{WSU},
	Anton A. Stoorvogel\aref{UT}}

\affiliation[WSU]{School of Electrical Engineering and Computer Science, Washington State University, Pullman,~WA,~USA
	\email{donya.nojavanzadeh@wsu.edu}\email{saberi@wsu.edu}}
\affiliation[ZW]{College of Information Science and Engineering, Northeastern University, Shenyang, P.~R.~China
	\email{liuzhenwei@ise.neu.edu.cn}}
\affiliation[UT]{Department of Electrical Engineering, Mathematics
	and Computer Science, University of Twente,~Enschede,~The Netherlands
	\email{a.a.stoorvogel@utwente.nl}}

	\maketitle

\begin{abstract}
	This paper studies scale-free protocol design for $H_{\infty}$ almost output and regulated output synchronization of heterogeneous multi-agent systems with linear, right-invertible and introspective agents
	in presence of external disturbances. The collaborative linear dynamic protocol designs are based on localized
	information exchange over the same communication network, which do not require any knowledge of the
	directed network topology and spectrum of the associated
	Laplacian matrix. Moreover, the proposed scale-free protocols achieve $H_{\infty}$ almost synchronization with a given
	arbitrary degree of accuracy for any size of the network.
\end{abstract}

	\keywords{Heterogeneous multi-agent systems, $H_\infty$ almost synchronization, Scale-free collaborative protocols}
	
	\footnotetext{This work is supported by the Nature Science Foundation of Liaoning Province, PR China under Grant 2019-MS-116, the Fundamental Research Funds for the Central Universities of China under Grant N2004014 and the United States National Science Foundation under Grant 1635184.}

\section{Introduction}

In recent decades, the synchronization problem for multi-agent systems
(MAS) has attracted substantial attention due to the wide potential
for applications in several areas such as automotive vehicle control,
satellites/robots formation, sensor networks, and so on. See books \cite{ren-book,wu-book, kocarev-book, bullobook} and references therein.

State synchronization inherently requires homogeneous networks
(i.e. agents which have identical models). 
 For heterogeneous network it is more reasonable to consider output synchronization since the dimensions of states and their physical interpretation may be different. If the agents have absolute measurements of their own dynamics in addition to relative information from the network, they are said to be introspective, otherwise, they are called non-introspective. For heterogeneous MAS with non-introspective agents, it is well-known that one needs to regulate outputs of the agents to a priori given trajectory generated by a so-called exosystem (see
\cite{wieland-sepulchre-allgower, grip-saberi-stoorvogel3}). Other works on synchronization of MAS with non-introspective agents can be found in the literature as  \cite{grip-yang-saberi-stoorvogel-automatica,grip-saberi-stoorvogel}. 

Synchronization and almost synchronization in presence of external disturbances are studied in the literature, where three classes of disturbances have been considered namely:
\begin{enumerate}
	 \item Disturbances and measurement noise with known frequencies.
	 \item Deterministic disturbances with finite power.
	 \item Stochastic disturbances with bounded variance.
	 \end{enumerate}
  
For disturbances and measurement noise with known frequencies, it is shown in \cite{zhang-saberi-stoorvogel-ACC2015} and \cite{zhang-saberi-stoorvogel-CDC2016} that actually exact synchronization is achievable. This is shown in \cite{zhang-saberi-stoorvogel-ACC2015} for heterogeneous MAS with minimum-phase and non-introspective agents and networks with time-varying directed communication graphs. Then, \cite{zhang-saberi-stoorvogel-CDC2016} extended this results for non-minimum phase agents utilizing localized information exchange. 

For deterministic disturbances with finite power, the notion of $H_\infty$ almost synchronization is introduced by Peymani et.al for homogeneous MAS with non-introspective agents utilizing additional communication exchange \cite{peymani-grip-saberi}. The goal of $H_\infty$ almost synchronization is to reduce the impact of disturbances on the synchronization error to an arbitrarily degree of accuracy (expressed in the $H_\infty$ norm). This work was extended later in \cite{peymani-grip-saberi-wang-fossen,zhang-saberi-grip-stoorvogel,zhang-saberi-stoorvogel-sannuti2} to heterogeneous MAS with non-introspective agents and without the additional communication and for network with time-varying graphs. $H_\infty$ almost synchronization via static protocols is studied in
\cite{stoorvogel-saberi-liu-nojavanzadeh-ijrnc19} for MAS with passive and passifiable agents. Recently, necessary and sufficient conditions are provided in \cite{stoorvogel-saberi-zhang-liu-ejc} for solvability $H_\infty$ almost synchronization of homogeneous networks with non-introspective agents and without additional communication exchange. Finally, we developed a scale-free framework for $H_\infty$ almost state synchronization for homogeneous network \cite{liu-saberi-stoorvogel-donya-almost-automatica} utilizing suitably designed localized information exchange.

In the case of stochastic disturbances with bounded variance, the concept of stochastic almost synchronization is introduced by \cite{zhang-saberi-stoorvogel-stochastic2} where both stochastic disturbance and disturbance with known frequency are considered. The idea of stochastic almost synchronization is to reduce the stochastic RMS norm of synchronization error arbitrary small in the presence of colored stochastic disturbances that can be modeled as the output of linear time invariant systems driven by white noise with unit power spectral intensities. By augmenting this model with agent model one can essentially assume that stochastic disturbance is white noise with unit power spectral intensities. In this case under linear protocols the stochastic RMS norm of synchronization error is the $H_2$ norm of the transfer function from disturbance to the synchronization error. As such one can formulate the stochastic almost synchronization equivalently in a deterministic framework requiring to reduce the $H_2$ norm of the transfer function from disturbance to synchronization error arbitrary small. This deterministic approach is referred to as almost $H_2$ synchronization problem which is equivalent to stochastic almost synchronization problem. Recent work on $H_2$ almost synchronization problem is \cite{stoorvogel-saberi-zhang-liu-ejc} which provided necessary and sufficient conditions for solvability of $H_\infty$ almost synchronization for homogeneous networks with non-introspective agents and without additional communication exchange. Finally, $H_2$ almost synchronization via static protocols is studied in \cite{stoorvogel-saberi-liu-nojavanzadeh-ijrnc19} for MAS with passive and passifiable agents.

In this paper, we develop scale-free protocol design for $H_{\infty}$ almost synchronization of heterogeneous MAS in presence of external disturbances. A collaborative linear parameterized dynamic protocols with localized information exchange is proposed which can work for heterogeneous MAS with any size of the network. The main contribution of this work is that the protocol design does not require any information about the communication network such as a lower bound of non-zero eigenvalues of the associated Laplacian matrix and the number of agents. Moreover, the scalable protocol achieves $H_{\infty}$ almost synchronization with a given arbitrary degree of accuracy for heterogeneous MAS with any number of agents.

\subsection*{Notations and Background}
Given a matrix $A\in \mathbb{R}^{m\times n}$, $A\T$ and $A^*$ denote
transpose and conjugate transpose of $A$ respectively while $\|A\|_2$
denotes the induced 2-norm (which has submultiplicative property). The $\mathrm{im}(\cdot)$ denote the image of matrix (vector). A square matrix $A$ is said to be Hurwitz stable if all its eigenvalues are in the open left half complex plane. $A\otimes B$ depicts the Kronecker product between $A$ and $B$. A block diagonal matrix constructed by $A_i$'s is denoted by $\diag\{A_i\}$ for $i=1,\hdots, n$. 
$I_n$ denotes the $n$-dimensional identity matrix and $0_n$ denotes $n\times n$ zero
matrix; sometimes we drop the subscript if the dimension is clear from
the context. For a deterministic continuous-time vector signal $v(t)$, the $L_2$ norm is defined by 
\begin{equation}
\|v(t)\|_{L_2}=\left(\int_{0}^{T}v(t)\T v(t)dt\right)^\frac{1}{2}
\end{equation}
and its \emph{Root Mean Square (RMS)} value is defined by 
\begin{equation}
\|v(t)\|_{RMS}=\left(\lim_{T\to \infty}\frac{1}{T}\int_{0}^{T}v(t)\T v(t)dt\right)^\frac{1}{2}
\end{equation}   
and for a stochastic vector signal $v(t)$ which is modeled as wide-sense stationary stochastic process, the $\|v(t)\|_{RMS}$ is given by
\begin{equation}\label{rms-s}
\|v(t)\|_{RMS}=\left( \mathbf{E}[v\T(t)v(t)]\right)^\frac{1}{2}
\end{equation}
where $\mathbf{E}[\cdot]$ stands for the expectation operation. For stochastic signals that approach
wide-sense stationarity as time $t$ goes to infinity (i.e. for asymptotically wide-sense stationary signals) \eqref{rms-s} is rewritten as 
\begin{equation}\label{rms-s-s}
\|v(t)\|_{RMS}=\left( \lim_{t\to \infty}\mathbf{E}[v\T(t)v(t)]\right)^\frac{1}{2}.
\end{equation}

The $H_\infty$ norm of $G(s)$ is defined as 
\[
\|G\|_{H_\infty}:=\sup_\omega \sigma_{\max}[G(j\omega)]
\]
where $\sigma_{\max}$ is the largest singular value of $G(j\omega)$. Let $\omega(t)$ and $z(t)$ be energy signals which are respectively the input and the corresponding output of the given system. Then, the $H_\infty$ norm of $G(s)$ turns out to coincide with its RMS gain, namely 
\[
\|G\|_{H_\infty}=\|G\|_{RMS\text{ }gain} =\sup_{\|\omega\|\neq 0} \frac{\|z\|_{RMS}}{\|\omega\|_{RMS}}
\]

An important property of the $H_\infty$ norm is that it is submultiplicative. That is for transfer functions $G_1$ and $G_2$, we have
\[
\|G_1G_2\|_{H_\infty} \leq\|G_1\|_{H_\infty}\|G_2\|_{H_\infty}.
\]

 A \emph{weighted graph} $\mathcal{G}$ is defined by a triple
$(\mathcal{V}, \mathcal{E}, \mathcal{A})$ where
$\mathcal{V}=\{1,\ldots, N\}$ is a node set, $\mathcal{E}$ is a set of
pairs of nodes indicating connections among nodes, and
$\mathcal{A}=[a_{ij}]\in \mathbb{R}^{N\times N}$ is the weighting
matrix. Each pair in $\mathcal{E}$ is called an \emph{edge}, where
$a_{ij}>0$ denotes an edge $(j,i)\in \mathcal{E}$ from node $j$ to
node $i$ with weight $a_{ij}$. Moreover, $a_{ij}=0$ if there is no
edge from node $j$ to node $i$. We assume there are no self-loops,
i.e.\ we have $a_{ii}=0$. A \emph{path} from node $i_1$ to $i_k$ is a
sequence of nodes $\{i_1,\ldots, i_k\}$ such that
$(i_j, i_{j+1})\in \mathcal{E}$ for $j=1,\ldots, k-1$. A
\emph{directed tree} with root $r$ is a subgraph of the graph
$\mathcal{G}$ in which there exists a unique path from node $r$ to
each node in this subgraph. A \emph{directed spanning tree} is a
directed tree containing all the nodes of the graph. 

For a weighted graph $\mathcal{G}$, the matrix
$L=[\ell_{ij}]$ with
\[
\ell_{ij}=
\begin{system}{cl}
\sum_{k=1}^{N} a_{ik}, & i=j,\\
-a_{ij}, & i\neq j,
\end{system}
\]
is called the \emph{Laplacian matrix} associated with the graph
$\mathcal{G}$. The Laplacian matrix $L$ has all its eigenvalues in the
closed right half plane and at least one eigenvalue at zero associated
with right eigenvector $\textbf{1}$, i.e. a vector with all entries
equal to $1$. When graph contains a spanning tree,  then it follows from \cite[Lemma
$3.3$]{ren-beard} that the Laplacian matrix $L$ has a simple
eigenvalue at the origin, with the corresponding right eigenvector
$\textbf{1}$, and all the other eigenvalues are in the open right-half
complex plane.

\section{Problem formulation}

Consider a heterogeneous MAS composed of $N$ nonidentical linear time-invariant agents of the form,
\begin{equation}\label{homst-agent-model}
\begin{system*}{cl}
\dot{x}_i &= A_ix_i +B_i u_i+E_i\omega_i,  \\
y_i &= C_ix_i,
\end{system*}\qquad (i=1,\ldots,N)
\end{equation}
where $x_i\in\R^{n_i}$, $u_i\in\R^{m_i}$, $y_i\in\R^p$ are respectively the
state, input, and output vectors of agent $i$, and
$\omega_i\in\R^{w_i}$ is the external disturbance. $n_{q0}\geq 1$ is the  upper bound on infinite zeros of triples $(C_i,A_i,B_i)$ for $i \in \{1,..., N\}$. 

The agents are introspective, meaning that each agent has access to its own local information. Specifically, each agent has access to part of its state, i.e.,
\begin{equation}
z_i=C_i^mx_i
\end{equation}
where $z_i\in\mathbb{R}^{q_i}$.

The communication network provides each agent with a linear
combination of its own outputs relative to that of other neighboring
agents. In particular, each agent $i\in\{1,\ldots,N\}$ has access to the
quantity
\begin{equation}\label{homst-zeta1}
\zeta_i = \sum_{j=1}^{N}a_{ij}(y_i-y_j)
\end{equation}
where $a_{ij}\geq 0$ and $a_{ii}=0$ indicate the communication among
agents. This communication topology of the network can
be described by a weighted and directed graph $\mathcal{G}$ with nodes
corresponding to the agents in the network and the weight of edges
given by the coefficient $a_{ij}$. In terms of the coefficient of the associated
Laplacian matrix $L$, \eqref{homst-zeta1} can be rewritten as
\begin{equation}\label{homst-zeta}
\zeta_i = \sum_{j=1}^{N}\ell_{ij}y_j.
\end{equation}



The goal of this paper is to design scale-free protocols which can be achieved by utilizing localized information exchange among neighbors, as such each agent $i\in\{1,\ldots, N\}$ has access to localized information, denoted by $\hat{\zeta}_i$, of the form
\begin{equation}\label{eqa1}
\hat{\zeta}_i=\sum_{j=1}^Na_{ij}(\xi_i-\xi_j)
\end{equation}
where $\xi_j$ is a variable produced internally by
agent $j$ which will be appropriately chosen in the coming sections.

For almost output synchronization, we define the set of graphs $\mathbb{G}^N$ for the network communication topology as following.

\begin{definition}\label{def-G}
	Let $\mathbb{G}^N$ denote the set of directed graphs of $N$
	agents which contains a directed spanning tree.
\end{definition}


We formulate the scale-free $H_\infty$ almost output
synchronization problem of a heterogeneous MAS with localized information exchange as following.

\begin{problem}\label{prob4}
  The \textbf{scale-free $H_\infty$ almost output synchronization
    problem with localized information exchange (scale-free $H_\infty$-AOSWLIE)} for MAS \eqref{homst-agent-model} and
  \eqref{homst-zeta} is to find, if possible, a scalar parameter $\eps^*>0$ and a fixed linear protocol
  parameterized in scalar $\eps$,  \textbf{only} based on the knowledge of agent models $(C_i, A_i,B_i)$, of the form
  \begin{equation}\label{protoco1}
  \begin{system}{cl}
  \dot{x}_{c_i}&=A_{c_i}(\eps)x_{c_i}+B_{c_i}(\eps){\zeta}_i
  +C_{c_i}(\eps)\hat{\zeta}_i+D_{c_i}(\eps)z_i\\
  u_i&=E_{c_i}(\eps)x_{c_i}+F_{c_i}(\eps){\zeta}_i
  +G_{c_i}(\eps)\hat{\zeta}_i+H_{c_i}(\eps)z_i
  \end{system}
  \end{equation}
  where $\hat{\zeta}_i$ is defined by \eqref{eqa1}, with $\xi_i=M_{c_i} x_{c_i}$ with $x_{c_i}\in \mathbb{R}^{n_{c_i}}$ such that for any number of agents $N$, and any communication graph $\mathcal{G}\in\mathbb{G}^N$ we have:
  \begin{itemize}
  	\item in the absence of the disturbance $\omega=\begin{pmatrix} \omega_1\T &\hdots & \omega_N\T 
  	\end{pmatrix}\T$, for all initial
  	conditions the output synchronization 
  	 \begin{equation}\label{synch_org}
  	\lim_{t\to \infty} (y_i-y_j)=0 \quad  \text{for all $i,j \in \{1,...,N\}$}
  	\end{equation}
  	 is achieved
  	for any $\eps\in(0,\eps^*]$.
  	\item in the presence of the disturbance $\omega$, for any $\gamma>0$,
  	one can render the $H_{\infty}$ norm from $\omega$ to $y_i-y_j$ less
  	than $\gamma$ by choosing $\eps$ sufficiently small.
  \end{itemize}

The architecture of the protocol \eqref{protoco1} is shown in Figure \ref{Architecture}.
\end{problem}

\begin{figure}[t]
	\includegraphics[width=8.5cm, height=5cm]{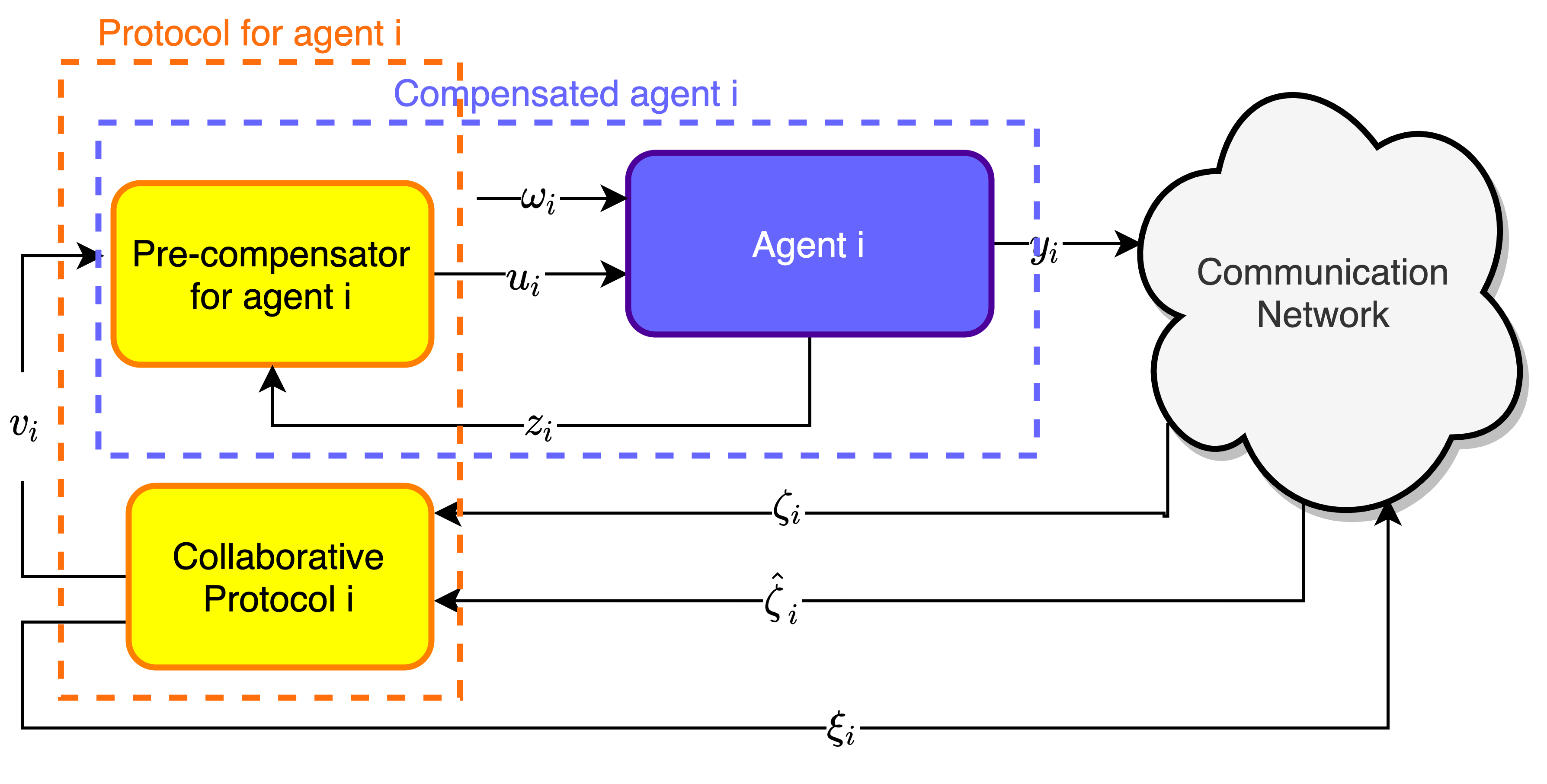}
	\centering
	\caption{Architecture of scale-free collaborative protocol for $H_{\infty}$ almost output synchronization}\label{Architecture}
\end{figure}

In the case of $H_\infty$ almost regulated output synchronization, we consider a reference trajectory $y_r$ generated by a so-called exosystem as
\begin{equation}\label{exo}
\begin{system*}{cl}
\dot{x}_r&=A_rx_r, \quad x_r(0)=x_{r0},\\
y_r&=C_rx_r,
\end{system*}
\end{equation}
where $x_r \in\mathbb{R}^r$ and $y_r\in\mathbb{R}^p$.

We assume a nonempty subset $\mathscr{C}$ of the agents which have access to their output relative to the output of the exosystem. In other word, each agent $i$ has access to the quantity 
\begin{equation}
\Psi_i=\iota_i(y_i-y_r), \qquad \iota_i=\begin{system}{cl}
1, \quad i\in \mathscr{C},\\
0, \quad i\notin \mathscr{C}.
\end{system}
\end{equation}

By combining the above equation with \eqref{homst-zeta1}, the information exchange among agents is given by
\begin{equation}\label{zetabar}
\bar{\zeta}_i=\sum_{j=1}^{N}a_{ij}(y_i-y_j)+\iota_i(y_i-y_r).
\end{equation}

The exchanging information $\bar{\zeta}_i$ as defined in above, can be rewritten in terms of the coefficients of so-called expanded Laplacian matrix $\bar{L}=L+diag\{\iota_i\}=[\bar{\ell}_{ij}]_{N \times N}$ as
\begin{equation}\label{zetabar2}
\bar{\zeta}_i=\sum_{j=1}^{N}\bar{\ell}_{ij}(y_j-y_r).
\end{equation}

Note that $\bar{L}$ is not a regular Laplacian matrix associated to the graph, since the sum of its rows need not be zero. We know that all the eigenvalues of $\bar{L}$, have positive real parts. In particular matrix $\bar{L}$ is invertible.

To guarantee that each agent gets the information from the exosystem, we need to make sure that there exists a path from node set $\mathscr{C}$ to each node.  Therefore, we define the following set of graphs.
\begin{definition}\label{def_rootset}
	Given a node set $\mathscr{C}$, we denote by $\mathbb{G}_{\mathscr{C}}^N$ the set of all graphs with $N$ nodes containing the node set $\mathscr{C}$, such that every node of the network graph $\mathcal{G}\in\mathbb{G}_\mathscr{C}^N$ is a member of a directed tree
	which has its root contained in the node set $\mathscr{C}$. We will refer to the node set $\mathscr{C}$ as root set.
\end{definition}

\begin{remark}
	Note that Definition \ref{def_rootset} does not require necessarily the existence of directed spanning tree.
\end{remark}

Now, we formulate scale-free $H_{\infty}$ almost regulated output synchronization problem.

\begin{problem}\label{prob-reg}
	The \textbf{scale-free $H_\infty$ almost regulated output synchronization
		problem with localized information exchange (scale-free $H_\infty$-AROSWLIE)} for MAS \eqref{homst-agent-model} and
	\eqref{zetabar2} and the associated exosystem \eqref{exo} is to find, if possible, a scalar parameter $\eps^*>0$ and a fixed linear protocol
	parameterized in scalar $\eps$,  \textbf{only} based on the knowledge of agent models $(C_i, A_i,B_i)$, of the form
	\begin{equation}\label{protoco2}
	\begin{system}{cl}
	\dot{x}_{c_i}&=A_{c_i}(\eps)x_{c_i}+B_{c_i}(\eps)\bar{\zeta}_i
	+C_{c_i}(\eps)\hat{\zeta}_i+D_{c_i}(\eps)z_i\\
	u_i&=E_{c_i}(\eps)x_{c_i}+F_{c_i}(\eps)\bar{\zeta}_i
	+G_{c_i}(\eps)\hat{\zeta}_i+H_{c_i}(\eps)z_i
	\end{system}
	\end{equation}
	where $\hat{\zeta}_i$ is defined by \eqref{eqa1}, with $\xi_i=M_{c_i} x_{c_i}$ with $x_{c_i}\in \mathbb{R}^{n_{c_i}}$ such that for any number of agents $N$, and any communication graph $\mathcal{G}\in\mathbb{G}_\mathscr{C}^N$ we have:
	\begin{itemize}
		\item in the absence of the disturbance $\omega=\begin{pmatrix} \omega_1\T &\hdots & \omega_N\T 
		\end{pmatrix}\T$, for all initial
		conditions the regulated output synchronization 
		\begin{equation}\label{synch_org-reg}
		\lim_{t\to \infty} (y_i-y_r)=0 \quad  \text{for all $i \in \{1,...,N\}$}
		\end{equation}
		is achieved
		for any $\eps\in(0,\eps^*]$.
		\item in the presence of the disturbance $\omega$, for any $\gamma>0$,
		one can render the $H_{\infty}$ norm from $\omega$ to $y_i-y_r$ less
		than $\gamma$ by choosing $\eps$ sufficiently small.
	\end{itemize}

	The architecture of the protocol \eqref{protoco2} is shown in Figure \ref{Architecture-reg}.
\end{problem}

\begin{figure}[t]
	\includegraphics[width=8.5cm, height=5.5cm]{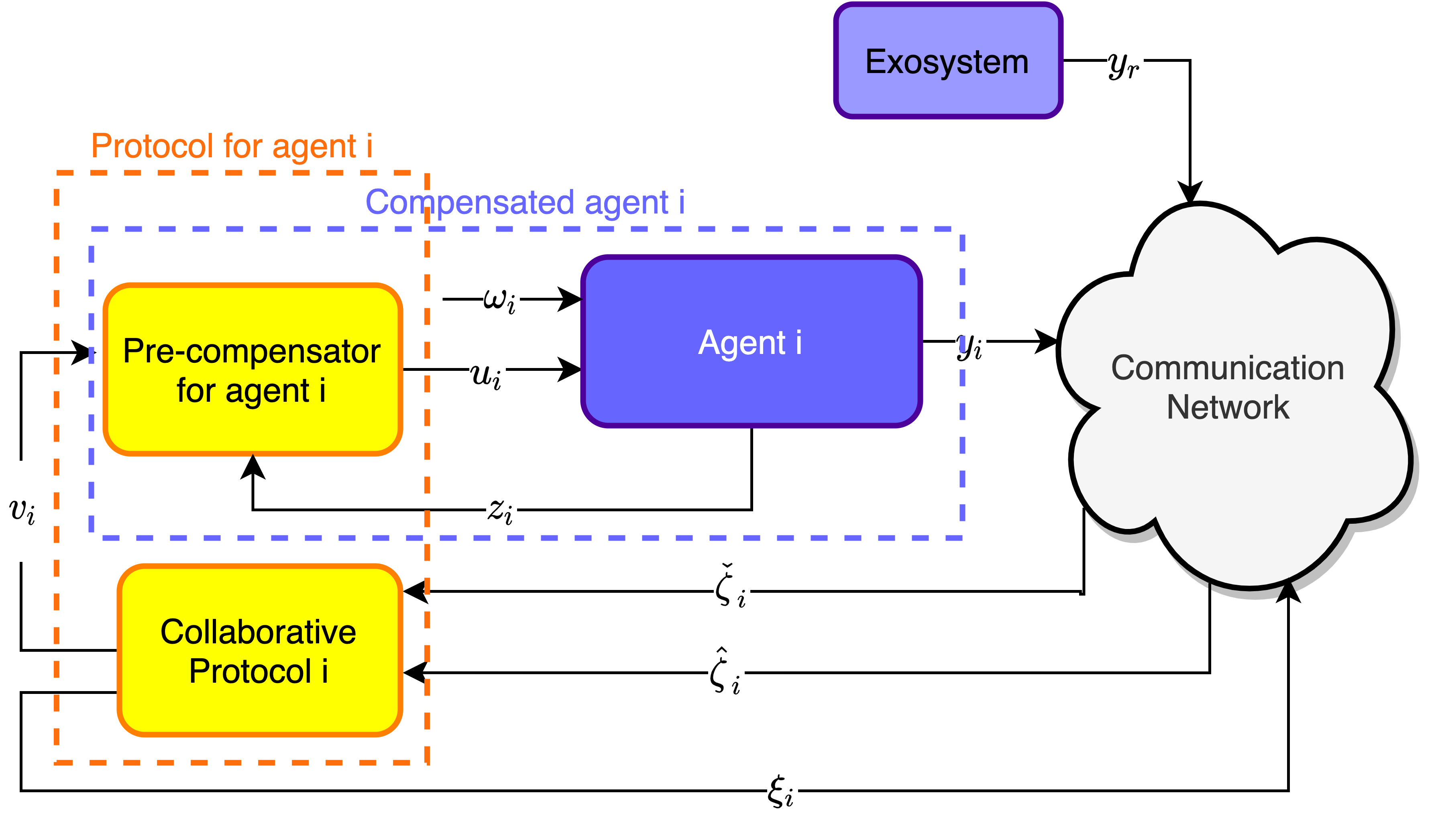}
	\centering
	\caption{Architecture of scale-free collaborative protocol for $H_{\infty}$ almost regulated output synchronization}\label{Architecture-reg}
\end{figure}

We make the following assumptions for
agents and the exosystem.
\begin{assumption}\label{ass2} For agents $i \in \{1,..., N\}$,
	\begin{enumerate}
		\item $(C_i,A_i,B_i)$ is stabilizable and detectable.
		\item $(C_i,A_i,B_i)$ is right-invertible.
		\item $(C_i^m, A_i)$ is detectable.
	\end{enumerate}
\end{assumption}

\begin{assumption}\label{ass1} For exosystem
	\begin{enumerate}
		\item $(C_r,A_r)$ is observable.
		\item All the eigenvalues of $A_r$ are in the closed left half complex plane.
	\end{enumerate}
\end{assumption}

\section{Scale-free $H_\infty$ almost output synchronization}

According to Figure \ref{Architecture}, the design for scale-free $H_{\infty}$ almost output synchronization includes two major modules. First, we design precompensators for homogenizing agents, then we design collaborative protocols for the compensated agents to achieve $H_{\infty}$ almost output synchronization. The design consists of the following steps.

\hspace{3mm} \textbf{Step I: choosing target model} 
First we choose suitable target model, i.e. $(C_d, A_d,B_d)$ such that the following conditions are satisfied.
\begin{enumerate}
	\item $\rank{C_d}=p$,
	\item $(C_d, A_d,B_d)$ is invertible, of uniform rank $n_q\geq n_{q0}$ and has no invariant zero. 	$n_{q0}\geq 1$ is the maximum order of infinite zeros of triples $(C_i,A_i,B_i)$ for $i \in \{1,..., N\}$. 
	\item eigenvalues of $A_d$ are in the closed left half plane.
\end{enumerate}
\begin{remark}
	The choice of $A_d$ is designer choice and obviously the eigenvalues of $A_d$ determines the desired trajectory.
\end{remark}
\begin{remark}\label{remark-ABC}
	Without loss of generality, we assume that the triple $(C_d,A_d,B_d)$ has the following form:
	\[
	\begin{system*}{cll}
	A_d&=\bar{A}+\bar{B}\Gamma,\quad \bar{A}:=\begin{pmatrix}
	0&I_{p(n_q-1)}\\0&0	\end{pmatrix},\\ B_d&=\bar{B}:=\begin{pmatrix}
	0\\I_p
	\end{pmatrix},
	 C_d=\bar{C}:=\begin{pmatrix}
	I_p&0
	\end{pmatrix}.
	\end{system*}
	\]
\end{remark}

\hspace{3mm} \textbf{Step II: designing precompensators}  Following the instruction given in \cite[Appendix B]{yang-saberi-stoorvogel-grip-journal}, we design the following precompensator for each agent of MAS \eqref{homst-agent-model} to homogenize the agents to the target model chosen in the previous step.
\begin{tcolorbox}[colback=white]
	\begin{equation}\label{precom}
		\begin{system*}{cl}
			\dot{p}_i&=G_ip_i+H_{1i}v_i+H_{2i}z_{i}, \\
			u_i&=Q_ip_i+R_{1i}v_i+R_{2i}z_{i},
		\end{system*}
	\end{equation}
	where $v_i$ is the input of the precompensator.
\end{tcolorbox}

We obtain the compensated agents by combining \eqref{homst-agent-model} and \eqref{precom} as
\begin{equation}\label{comMAS}
	\begin{system*}{cl}
		\dot{\bar{x}}_i&=\bar{A}\bar{x}_i+\bar{B}(v_i+\Gamma\bar{x}_i)+\bar{E}_i\bar{\omega}_i, \\
		y_i&=\bar{C}\bar{x}_i,
	\end{system*}
\end{equation}
where $
\bar{\omega}_i=\begin{pmatrix}
	\omega_i\T&\rho_i\T
\end{pmatrix}\T$. 
Meanwhile, $\rho_i$ is generated by the following system
\[
\begin{system*}{cl}
	\dot{\theta}_i&=S_i\theta_i+E_{0i}\bar{\omega}_i, \\
	\rho_i&=W_i\theta_i,
\end{system*}
\] 
where $S_i$ is Hurwitz stable.

\hspace{3mm} \textbf{Step III: designing collaborative protocols for the compensated agents} 
In this step, the following linear dynamic protocol is designed for the compensated agents \eqref{comMAS} as
\begin{tcolorbox}[colback=white]
	\begin{equation}\label{pro-lin-partial}
		\begin{system}{cll}
			\dot{\hat{x}}_i&=& \bar{A}\hat{x}_i-\eps^{-n_q}\bar{B}F\Delta\hat{\zeta}_{i}+\bar{B}\Gamma\hat{x}_i+\eps^{-1}K(\zeta_{i}-\bar{C}\hat{x}_i)\\
			\dot{\chi}_i &=& \bar{A}\chi_i+\bar{B}v_i+\bar{B}\Gamma\chi_i+\eps^{-1}(\hat{x}_i-\hat{\zeta}_{i})\\
			v_i &=& -\eps^{-n_q}F\Delta\chi_i,
		\end{system}
	\end{equation}
	where $\eps\in(0,\eps^*]$ is the tuning parameter. We define the scaling matrix $\Delta=\diag\{I_p,\eps I_p,\hdots, \eps^{n_q-1}I_p\}\in\mathbb{R}^{pn_q \times pn_q}$ and we choose matrix $F$ such that $\bar{A}-\bar{B}F$ is Hurwitz stable. Meanwhile, we partition $\bar{A}, \bar{B}, \Gamma$ as
	\[
	\bar{A}=\begin{pmatrix}
		0_{p\times p}&\bar{C}_1\\0_{p(n_q-1)\times p}&\bar{A}_1
	\end{pmatrix}, \bar{B}=\begin{pmatrix}
		0_{p\times p}\\\bar{B}_1
	\end{pmatrix}, \Gamma=\begin{pmatrix}\Gamma_1&\Gamma_2\end{pmatrix}
	\]
	with $\bar{C}_1=\begin{pmatrix}
		I_p&0&\cdots&0
	\end{pmatrix}$.
	We  choose $K=\begin{pmatrix}
		K_1\\K_2K_1
	\end{pmatrix}$
	with $K_2$ such that $\bar{A}_1+\bar{B}_1\Gamma_2-K_2\bar{C}_1$ is Hurwitz stable. And then let $K_1=K_1\T>\frac{\alpha}{2}I\in \R^{p\times p}$, where the value of $\alpha$ depends on $K_2$ and is given explicitly in the proof of Theorem \ref{sstem1}.
	
	In this protocol, each agent communicate $\chi_i$ with its neighbors, (i.e., $\xi_i=\chi_i$). Therefore, each agent has access to the localized information exchange appeared in \eqref{pro-lin-partial} represented by
	\begin{equation}\label{add_1}
		\hat{\zeta}_{i}=\sum_{j=1}^N{\ell}_{ij}\chi_j,
	\end{equation}
	which can be simply obtained by $\xi_i=\chi_i$ in \eqref{eqa1}.
\end{tcolorbox}
 We have the following theorem.

\begin{theorem}\label{sstem1}
	Consider a MAS described by \eqref{homst-agent-model} and  \eqref{homst-zeta} satisfying Assumption \ref{ass2}. Let $\mathbb{G}^N$ be the set of network graphs as
	defined in Definition \ref{def-G}.Then, the scalable $H_\infty$-AOSWLIE problem as stated in Problem \ref{prob4} is solvable. More specifically,
	\begin{enumerate}
		\item in the absence of the disturbance $\omega$, protocol \eqref{pro-lin-partial} and \eqref{precom} achieves output synchronization \eqref{synch_org}, for any graph $\mathcal{G}\in \mathbb{G}^N$ with any number of
		agents $N$, and for any
		$\eps\in(0,\eps^*]$ with $\eps^*<1$ where the value of $\eps^*$ only depends on $\bar{A},\bar{B},\Gamma, F$,
		\item in the presence of the disturbance
		$\omega$, for any $\gamma>0$, the $H_\infty$ norm from $\omega$ to
		$y_i-y_j$ is less that $\gamma$ for all $i,j \in\{1,\hdots, N\}$ by choosing $\eps$ sufficiently small.
	\end{enumerate}
\end{theorem}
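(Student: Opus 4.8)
The plan is to study the closed loop formed by the homogenized (compensated) agents \eqref{comMAS} together with the protocol \eqref{pro-lin-partial}, working entirely in the common coordinates $(\bar A,\bar B,\bar C)$ with shared feedthrough $A_d:=\bar A+\bar B\Gamma$. The decisive move is to choose error variables in which the cancellations built into the protocol become transparent. I would introduce the distributed observer error $\tilde e_i:=\hat x_i-\sum_j\ell_{ij}\bar x_j$ and the controller mismatch $\phi_i:=\bar x_i-\chi_i$. Differentiating $\tilde e_i$, the term $-\eps^{-n_q}\bar BF\Delta\hat\zeta_i$ is exactly $-\eps^{-n_q}\bar BF\Delta\sum_j\ell_{ij}\chi_j$ because $\hat\zeta_i=\sum_j\ell_{ij}\chi_j$, so these cancel and one is left with the \emph{graph-independent, agent-decoupled} system $\dot{\tilde e}_i=(A_d-\eps^{-1}K\bar C)\tilde e_i-\sum_j\ell_{ij}\bar E_j\bar\omega_j$. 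Differentiating $\phi_i$, the two copies of $-\eps^{-n_q}\bar BF\Delta\chi_i$ cancel, leaving $\dot\phi_i=A_d\phi_i-\eps^{-1}\sum_j\ell_{ij}\phi_j-\eps^{-1}\tilde e_i+\bar E_i\bar\omega_i$, i.e.\ in stacked form the consensus system $\dot\phi=[I_N\otimes A_d-\eps^{-1}(L\otimes I)]\phi-\eps^{-1}\tilde e+\bar E\bar\omega$.

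For the disturbance-free claim I would establish three facts. First, since the $\tilde e_i$-dynamics are decoupled, stability reduces to $A_d-\eps^{-1}K\bar C$ being Hurwitz; with $K=\begin{pmatrix}K_1\\K_2K_1\end{pmatrix}$ this is a two-time-scale (high-gain observer) argument in which $K_2$ stabilizes the slow block $\bar A_1+\bar B_1\Gamma_2-K_2\bar C_1$ while $K_1=K_1\T>\tfrac{\alpha}{2}I$ with small $\eps$ makes the associated block Lyapunov inequality negative definite, which is what fixes $\alpha$. Second, because $A_d$ has its spectrum in the closed left half plane and $\mathcal G\in\mathbb G^N$ contains a spanning tree, $L$ has a simple zero eigenvalue and all others in the open right half plane, so the eigenvalues of $I_N\otimes A_d-\eps^{-1}(L\otimes I)$ are $\mu_k-\eps^{-1}\lambda_\ell$; on the disagreement subspace ($\lambda_\ell\neq0$) each has real part $\re\mu_k-\eps^{-1}\re\lambda_\ell<0$ for \emph{every} $\eps>0$ and every graph, so the disagreement part of $\phi$ vanishes and $\phi\to\mathbf 1\otimes\phi_{\mathrm{avg}}$. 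Third, substituting $\chi_i=\bar x_i-\phi_i$ gives the decoupled dynamics $\dot\chi_i=(A_d-\eps^{-n_q}\bar BF\Delta)\chi_i+\eps^{-1}\bigl(\tilde e_i+\sum_j\ell_{ij}\phi_j\bigr)$, and the scaling identity $\Delta^{-1}[\eps^{-1}(\bar A-\bar BF)]\Delta=\bar A-\eps^{-n_q}\bar BF\Delta$ together with a characteristic-polynomial estimate (the bounded $\bar B\Gamma$ perturbing an $\eps^{-1}$-scale Hurwitz spectrum) shows $A_d-\eps^{-n_q}\bar BF\Delta$ is Hurwitz for all $\eps$ below a threshold $\eps^*$ depending only on $\bar A,\bar B,\Gamma,F$. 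Since its inputs $\tilde e_i$ and $\sum_j\ell_{ij}\phi_j=[(L\otimes I)\phi]_i$ both vanish (the latter because $L$ annihilates the consensus value), $\chi\to0$; hence $\bar x=\chi+\phi\to\mathbf 1\otimes\phi_{\mathrm{avg}}$ and $y_i-y_j=\bar C(\bar x_i-\bar x_j)\to0$, giving \eqref{synch_org} for every $N$ and every $\mathcal G\in\mathbb G^N$.

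For the $H_\infty$ claim I would propagate $\bar\omega$, whose $\rho$-component is generated by the Hurwitz filter $S_i$ and hence has a bounded transfer from $\omega$, through the same coordinates, and bound the induced RMS gain from $\omega$ to $y_i-y_j=\bar C$ applied to the disagreement parts of $\chi$ and $\phi$. Rather than estimating each channel by its DC gain, I would set up a single Lyapunov/bounded-real inequality for the scaled closed loop (states $\Delta^{-1}\tilde e$, $\Delta^{-1}\phi$, $\Delta^{-1}\chi$), in which the $-\eps^{-1}(L\otimes I)$ coupling and the high-gain observer provide dissipation of order $\eps^{-1}$ that dominates the $O(1)$ disturbance ports $\bar E\bar\omega$; the resulting gain from $\omega$ to $y_i-y_j$ is then $O(\eps^{\delta})$ for some $\delta>0$ and can be driven below any prescribed $\gamma$ by taking $\eps$ small (the constants, and hence the required $\eps$, may here depend on $N$ and $\mathcal G$, which is permitted since the protocol itself is fixed and scale-free).

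The step I expect to be the main obstacle is exactly this $H_\infty$ estimate. The difficulty is that the disturbance reaches the synchronization error through several paths carrying opposite powers of $\eps$ — most dangerously the injection $-\eps^{-1}\tilde e$ into the $\phi$-equation, where $\tilde e$ itself responds to $\bar\omega$ through the $O(1)$ slow observer block, so a naive channel-by-channel bound threatens an $O(1)$ contribution. Controlling it requires a genuine cancellation among the $\chi$- and $\phi$-disagreement responses, which only the combined scaled Lyapunov analysis exposes. A secondary obstacle is keeping the synchronization threshold $\eps^*$ independent of $N$ and of the directed topology: this is clean for $\tilde e$ and $\chi$ (their system matrices do not contain $L$) and for $\phi$ it rests on the closed-left-half-plane property of $A_d$, so that stability of the disagreement modes holds for all $\eps>0$ without any lower bound on $\re\lambda_\ell$, which is precisely what a spectral argument for general directed graphs cannot supply.
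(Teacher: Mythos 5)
Your decomposition is, up to a change of coordinates, the paper's own. The paper works in difference coordinates $\tilde x_i=\bar x_i-\bar x_N$, $\tilde\chi_i=\chi_i-\chi_N$, $\tilde{\hat x}_i=\hat x_i-\hat x_N$, and then sets $e=\tilde x-\tilde\chi$ (the disagreement form of your $\phi$) and $\bar e=(\hat L\otimes I)\tilde x-\tilde{\hat x}$ (the disagreement form of your $\tilde e$, with $\hat\ell_{ij}=\ell_{ij}-\ell_{Nj}$); this eliminates the consensus mode at the outset, since $\hat L$ carries exactly the nonzero eigenvalues of $L$, whereas you keep absolute variables with the full $L$ and quotient out the consensus mode by hand. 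Both routes work, and your disturbance-free argument (the high-gain observer Lyapunov block for $A_d-\eps^{-1}K\bar C$, the eigenvalues $\mu_k-\eps^{-1}\lambda_\ell$ on the disagreement subspace valid for every $\eps>0$ and every graph, and the scaling similarity making $A_d-\eps^{-n_q}\bar BF\Delta$ Hurwitz with $\eps^*$ depending only on $\bar A,\bar B,\Gamma,F$) is exactly the paper's Step 3.

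The gap is in the $H_\infty$ step, and it sits precisely where you expect it. What closes that step in the paper is \emph{not} a cancellation between the $\chi$- and $\phi$-channels; it is a choice of scalings under which every disturbance port carries a factor $\eps$ while all dissipation terms and inter-block couplings stay $O(1)$, after which a plain weighted-sum (cascade) Lyapunov argument suffices. Concretely: the mismatch and state variables are scaled by $\Delta$ (by $\Delta$, not $\Delta^{-1}$ as you wrote --- scaling by $\Delta^{-1}$ would inflate the disturbance ports by $\eps^{-(n_q-1)}$), the observer error is scaled by $\bar\Delta=\begin{pmatrix} I&0\\ -\eps K_2&\eps I\end{pmatrix}$, and the Lyapunov functions are taken with $\eps$-dependent weights, $V_{\bar e}=\bar e_\Delta\T \bigl(I\otimes\diag\{\eps I,P\}\bigr)\bar e_\Delta$, $V_e=\eps\, e_\Delta\T Q e_\Delta$, $V_{\tilde x}=\eps\,\tilde x_\Delta\T (I\otimes Z)\tilde x_\Delta$. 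Then $\diag\{\eps I,P\}\bar\Delta=\eps\begin{pmatrix} I&0\\ -PK_2&P\end{pmatrix}$, so the observer error has $L_2$-gain $O(\eps)$ from $\bar\omega$; the dangerous $\eps^{-1}$ injection of the observer error into the mismatch equation is multiplied by the $\eps$-weight of $V_e$ and the $\eps$-uniformly bounded matrix $\Delta\bar\Delta^{-1}$, so it becomes an $O(1)$ coupling rather than $O(\eps^{-1})$; and each block then satisfies an estimate of the form $\dot V\le -\|\cdot\|^2+O(1)\|\text{upstream}\|^2+\eps^2 C\|\bar\omega\|^2$. Summing with constant weights ($V=2\|Z\bar BF\|^2\|Q\Delta\bar\Delta^{-1}\|^2 V_{\bar e}+\|Z\bar BF\|^2V_e+V_{\tilde x}$ in the paper) gives $\dot V+\|\tilde y\|^2-\eps^2\gamma^2\|\bar\omega\|^2\le 0$, hence the explicit bound $\|T_{\bar\omega\tilde y}\|_{H_\infty}\le\eps\gamma$ --- no bounded-real equation, no cross-channel cancellation, and no unspecified exponent $\delta$. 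Without exhibiting these scalings and weights, your sketch leaves the crux of part 2 of the theorem unproven: the channel-by-channel accounting you are worried about does close, but only in these coordinates.
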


\begin{proof}
	The proof includes four steps.
	
	1) \textit{Homogenization:}
	
	According to the construction given in \cite[Appendix B]{yang-saberi-stoorvogel-grip-journal}, we design precompensator \eqref{precom} for the heterogeneous MAS \eqref{homst-agent-model} to obtain the corresponding homogenized system \eqref{comMAS}.\\
	
	2) \textit{Closed loop system:}
	
	First, by defining $\tilde{x}_i=\bar{x}_i-\bar{x}_N$, $\tilde{y}_i=y_i-y_N$, $\tilde{\hat{x}}_i=\hat{x}_i-\hat{x}_N$, and $\tilde{\chi}_i=\chi_i-\chi_N$, we have
	\begin{align*}
		\dot{\tilde{x}}_i&=\bar{A}\tilde{x}_i+\bar{B}(v_i-v_N)+\bar{B}\Gamma\tilde{x}_i+\bar{E}_i\bar{\omega}_i-\bar{E}_N\bar{\omega}_N  \\
		\tilde{y}_i&=\bar{C}\tilde{x}_i\\
		\dot{\tilde{\hat{x}}}_i&= \bar{A}\tilde{\hat{x}}_i-\eps^{-n_q}\bar{B}F\Delta(\hat{\zeta}_{i}-\hat{\zeta}_{N})+\bar{B}\Gamma\tilde{\hat{x}}_i+\eps^{-1}K(\zeta_{i}-\zeta_{N}-\bar{C}\tilde{\hat{x}}_i)\\
		\dot{\tilde{\chi}}_i &= \bar{A}\tilde{\chi}_i+\bar{B}(v_i-v_N)+\bar{B}\Gamma\tilde{\chi}_i+\eps^{-1}(\hat{x}_i-\hat{x}_N-\hat{\zeta}_{i}+\hat{\zeta}_{N})\\
		v_i&-v_N = -\eps^{-n_q}F\Delta\tilde{\chi}_i,
	\end{align*}
	
	Let
	\[
	\tilde{x}=\begin{pmatrix}
		\tilde{x}_1\\\vdots\\\tilde{x}_{N-1}
	\end{pmatrix}, \tilde{\hat{x}}=\begin{pmatrix}
		\tilde{\hat{x}}_1\\\vdots\\\tilde{\hat{x}}_{N-1}
	\end{pmatrix}, \tilde{\chi}=\begin{pmatrix}
		\tilde{\chi}_1\\\vdots\\\tilde{\chi}_{N-1}
	\end{pmatrix}, \bar{\omega}=\begin{pmatrix}
		\bar{\omega}_1\\\vdots\\\bar{\omega}_{N}
	\end{pmatrix},
	\]
	then, we can rewrite the closed loop system as
	\begin{align*}
		\dot{\tilde{x}}&=(I\otimes(\bar{A}+\bar{B}\Gamma))\tilde{x}-\eps^{-n_q}(I\otimes(\bar{B}F\Delta))\tilde{\chi}+(\Pi\otimes I)\bar{E}\bar{\omega}\\
		\dot{\tilde{\hat{x}}}&=(I\otimes(\bar{A}+\bar{B}\Gamma-\eps^{-1}K\bar{C}))\tilde{\hat{x}}\\
		&\hspace{1cm}-\eps^{-n_q}(\hat{L}\otimes(\bar{B}F\Delta))\tilde{\chi}+\eps^{-1}(\hat{L}\otimes (K\bar{C}))\tilde{x}\\
		\dot{\tilde{\chi}}&= (I\otimes (\bar{A}+\bar{B}\Gamma)-\eps^{-1}\hat{L}\otimes I-\eps^{-n_q}I\otimes(\bar{B}F\Delta))\tilde{\chi}+\eps^{-1}\tilde{\hat{x}}
	\end{align*}
	where $\hat{L}=[\hat{\ell}_{ij}]_{(N-1)\times (N-1)}$ with $\hat{\ell}_{ij}=\ell_{ij}-\ell_{Nj}$ for $i,j=1,\cdots, N-1$ and $\Pi=\begin{pmatrix}
		I&-\textbf{1}
	\end{pmatrix}$. According to \cite[Lemma 1]{liu-saberi-stoorvogel-donya-almost-automatica}, we have all eigenvalues of $\hat{L}$ are the all nonzero eigenvalues of ${L}$. 
	
	Further, by defining $e=\tilde{x}-\tilde{\chi}$ and $\bar{e}=(\hat{L}\otimes I)\tilde{x}-\tilde{\hat{x}} $, we have
	\begin{align}
		\nonumber\dot{\tilde{x}}&=(I\otimes(\bar{A}-\eps^{-n_q}\bar{B}F\Delta+\bar{B}\Gamma))\tilde{x}\\
		&\hspace{1.5cm}+\eps^{-n_q}(I\otimes(\bar{B}F\Delta))e+(\Pi\otimes I)\bar{E}\bar{\omega}\label{systmat}\\
		\dot{e}&= (I\otimes \bar{A}-\eps^{-1}\hat{L}\otimes I)e+I\otimes(\bar{B}\Gamma)e+\eps^{-1}\bar{e}+(\Pi\otimes I)\bar{E}\bar{\omega}\label{systmat2}\\
		\dot{\bar{e}}&=(I\otimes(\bar{A}+\bar{B}\Gamma-\eps^{-1}K\bar{C}))\bar{e}+(\hat{L}\Pi\otimes I)\bar{E}\bar{\omega}
	\end{align}
	
	Let $\tilde{x}_\Delta=(I\otimes\Delta) \tilde{x}$, $e_\Delta=(I\otimes\Delta) e$, and $\bar{e}_\Delta=(I\otimes\bar{\Delta}) \bar{e}$ with 
	\[
	\bar{\Delta}=\begin{pmatrix}
		I_p&0\\-\eps K_2&\eps I_{p(n_q-1)}
	\end{pmatrix},
	\]
	then we have
	\begin{align}
		\nonumber\dot{\tilde{x}}_\Delta&=\eps^{-1}I\otimes(\bar{A}-\bar{B}F)\tilde{x}_\Delta+(I\otimes\bar{\Gamma})\tilde{x}_\Delta\\
		&\hspace{1cm}+\eps^{-1}(I\otimes(\bar{B}F))e_\Delta+(\Pi\otimes \Delta)\bar{E}\bar{\omega}\label{wd1}\\
		\nonumber\dot{e}_\Delta&= \eps^{-1}(I\otimes \bar{A}-\hat{L}\otimes I)e_\Delta+(I\otimes\bar{\Gamma})e_\Delta\\
		&\hspace{1cm}+\eps^{-1}I\otimes(\Delta\bar{\Delta}^{-1})\bar{e}_\Delta+(\Pi\otimes \Delta)\bar{E}\bar{\omega}\label{wd3}\\
		\dot{\bar{e}}_\Delta&=(I\otimes \bar{A}_{\bar{e}})\bar{e}_\Delta+(\hat{L}\Pi\otimes \bar{\Delta})\bar{E}\bar{\omega}\label{wd2}
	\end{align}
	where $\bar{\Gamma}=\eps^{n_q-1}\bar{B}\Gamma\Delta^{-1}$ and
	\[
	\bar{A}_{\bar{e}}=\begin{pmatrix}
		-\eps^{-1}K_1+\bar{C}_1K_2&\eps^{-1}\bar{C}_1\\
		\eps \Psi&\bar{A}_1+\bar{B}_1\Gamma_2-K_2\bar{C}_1
	\end{pmatrix}
	\]
	with $\Psi=\bar{A}_1K_2-K_2\bar{C}_1K_2+\bar{B}_1\Gamma_1+\bar{B}_1\Gamma_2K_2$.\\

	3) \textit{Synchronization in the absence of disturbance:}	
	
	In this step, we consider the stability of system \eqref{wd1}-- \eqref{wd2} by setting the disturbance equal to zero, i.e., $\bar{\omega}=0$, which implies synchronization in the absence of disturbance
	\begin{align}
		\dot{\tilde{x}}_\Delta&=\eps^{-1}I\otimes(\bar{A}-\bar{B}F)\tilde{x}_\Delta+(I\otimes\bar{\Gamma})\tilde{x}_\Delta+\eps^{-1}(I\otimes(\bar{B}F))e_\Delta\label{wod1}\\
		\dot{e}_\Delta&= \eps^{-1}(I\otimes \bar{A}-\hat{L}\otimes I)e_\Delta+(I\otimes\bar{\Gamma})e_\Delta+\eps^{-1}I\otimes(\Delta\bar{\Delta}^{-1})\bar{e}_\Delta\label{wod3}\\
		\dot{\bar{e}}_\Delta&=(I\otimes \bar{A}_{\bar{e}})\bar{e}_\Delta\label{wod2}
	\end{align}
	
	We consider stability of \eqref{wod2} first. Consider the following Lyapunov function for \eqref{wod2}
	\[
	V_{\bar{e}}=\bar{e}_\Delta\T I\otimes \begin{pmatrix}\eps I&0\\0&P\end{pmatrix}\bar{e}_\Delta
	\]
	where $K_1 \geq \frac{1}{2}\alpha I$ and $P>0$ satisfies 
	\[
	P(\bar{A}_1+\bar{B}_1\Gamma_2-K_2\bar{C}_1)+(\bar{A}_1+\bar{B}_1\Gamma_2-K_2\bar{C}_1)\T P\leq -\beta I
	\]
	with $\alpha=1+2\|\bar{C}_1K_2\|+\|P\Psi\|^2+\|\bar{C}_1\|^2$ and $\beta=3$.
	
	Let $\bar{e}_\Delta=\begin{pmatrix}
		\bar{e}_{\Delta,1}\\\bar{e}_{\Delta,2}
	\end{pmatrix}$. Then, we have
	\begin{align*}
		\dot{V}_{\bar{e}}\leq&-\alpha \|\bar{e}_{\Delta,1}\|^2+2\eps \bar{e}_{\Delta,1}I\otimes (\bar{C}_1K_2)\bar{e}_{\Delta,1}+2\bar{e}_{\Delta,1}(I\otimes \bar{C}_1) \bar{e}_{\Delta,2}\\
		&\qquad-\beta \|\bar{e}_{\Delta,2}\|^2+2\eps \bar{e}_{\Delta,2}I\otimes (P\Psi) \bar{e}_{\Delta,1}\\
		\leq & - \|\bar{e}_{\Delta,1}\|^2- \|\bar{e}_{\Delta,2}\|^2
	\end{align*}
	for any $\eps\in (0,1]$.
	
	Thus, from $\dot{V}_{\bar{e}}<0$, we have $\bar{e}_\Delta$ is stable, and then we just need to consider the stability of 
	\begin{equation}\label{stwod1}
		\bar{A}-\bar{B}F +\eps \bar{\Gamma}
	\end{equation}
	and 
	\begin{equation}\label{stwod2}
		(I\otimes \bar{A}-\hat{L}\otimes I)+\eps (I\otimes\bar{\Gamma})
	\end{equation}
	for \eqref{wod1} and \eqref{wod3}, respectively. For \eqref{stwod2}, we just need to prove the stability of
	\begin{equation}
		I\otimes (\bar{A}+\bar{B}\Gamma)-\eps^{-1}\hat{L}\otimes I=I\otimes A_d-\eps^{-1}\hat{L}\otimes I
	\end{equation}
	based on the transformation between \eqref{systmat2} and \eqref{wod3}.
	According to the proof of \cite[Theorem 1]{liu-saberi-stoorvogel-donya-almost-automatica}, we have $I\otimes A_d-\hat{L}\otimes I$ is asymptotically stable because the eigenvalues of $A_d$ are all on the imaginary axis. Thus, it is obvious that $I\otimes A_d-\eps^{-1}\hat{L}\otimes I$ is also asymptotically stable for any $\eps\in (0,1]$.

	On the other hand, since $\bar{A}-\bar{B}F$ is Hurwitz stable, there exist a sufficiently small $\eps^*\in(0,1]$ satisfying $\eps<\eps^*$ such that \eqref{stwod1} is stable for $\eps<\eps^*$.
	Therefore, the system \eqref{wod1}--\eqref{wod3} is stable. That means MAS \eqref{comMAS} can achieve output synchronizations which implies that the original heterogeneous MAS achieves output synchronization.\\
	
	4)  \textit{$H_\infty$ almost synchronization in the presence of disturbance:}
	
	Finally, in this step, we prove $H_\infty$ almost disturbance rejection of the output $\bar{y}$ of the system \eqref{wd1}--\eqref{wd2} in presence of $\bar{\omega}$ which implies $H_\infty$ almost output synchronization. First, we consider $\bar{e}_\Delta$. For system \eqref{wd2}, we have
	\[
	\left\|\bar{e}_\Delta\T \left(\hat{L}\Pi\otimes \begin{pmatrix}
		\eps I&0\\0&P
	\end{pmatrix}\bar{\Delta}\right)\bar{E}\bar{\omega}\right\|\leq \eps \bar{\alpha} \|\bar{e}_\Delta\|\|\bar{\omega}\|
	\]
	where $\bar{\alpha}\geq \left\|\left(\hat{L}\Pi\otimes \begin{pmatrix}
		I&0\\-PK_2&P
	\end{pmatrix}\bar{\Delta}\right)\bar{E}\right\|$. 
	Then, from the result of $\dot{V}_{\bar{e}}$, we have
	\begin{align*}
		\dot{V}_{\bar{e}_{\Delta}}\leq &- \|\bar{e}_{\Delta}\|^2+2\eps \bar{\alpha} \|\bar{e}_\Delta\|\|\bar{\omega}\|\\
		\leq&-\frac{1}{2}\|\bar{e}_{\Delta}\|^2+2\eps^2 \bar{\alpha}^2\|\bar{\omega}\|^2
	\end{align*}

	Next consider a Lyapunov function 
	\[
	V_{e_{\Delta}}=\eps e_{\Delta}\T Q e_{\Delta}
	\]
	with 
	\[
	Q(I\otimes \bar{A}-\hat{L}\otimes I)+(I\otimes \bar{A}-\hat{L}\otimes I)\T Q<-4 I
	\]
	since $I\otimes \bar{A}-\hat{L}\otimes I$ is stable.
	
	We have
	\begin{align*}
		\dot{V}_{e_{\Delta}}\leq& -4\|e_{\Delta}\|^2+2\eps \|Q(I\otimes \bar{\Gamma})\|\|e_{\Delta}\|^2\\
		&\qquad+2 \|Q\Delta\bar{\Delta}^{-1}\| \|e_{\Delta}\|\|\bar{e}_{\Delta}\|+2\eps \bar{\beta} \|e_{\Delta}\|\|\bar{\omega}\| \\
		\leq&-\|e_{\Delta}\|^2+\|Q\Delta\bar{\Delta}^{-1}\|^2 \|\bar{e}_{\Delta}\|^2+\eps^2 \bar{\beta}^2 \|\bar{\omega}\|^2
	\end{align*}
	for any $\eps\in(0,\eps_1]$ with $\eps_1\in(0,\eps^*]$ satisfying
	\[
	2\eps_1 \|Q(I\otimes \bar{\Gamma})\|<1,
	\]
	and $\bar{\beta}\geq \|Q(\Pi\otimes \Delta)\bar{E}\|$. Here we have $\|Q\Delta\bar{\Delta}^{-1}\|$ is bounded as the function of $\eps$.
	
	Similarly, for \eqref{wd1}, we consider the following 
	Lyapunov function 
	\[
	V_{\tilde{x}_\Delta}=\eps \tilde{x}_\Delta\T (I\otimes Z) \tilde{x}_\Delta
	\]
	with 
	\[
	Z(\bar{A}-\bar{B}F )+(\bar{A}-\bar{B}F )\T Z<-4 I
	\]
	since $\bar{A}-\bar{B}F$ is Hurwitz stable.
	Then we have
	\begin{align*}
		\dot{V}_{\tilde{x}_\Delta}\leq& -4\|\tilde{x}_\Delta\|^2+2\eps \|Z \bar{\Gamma}\|\|\tilde{x}_\Delta\|^2\\
		&\qquad+2 \|Z\bar{B}F\| \|\tilde{x}_\Delta\|\|e_{\Delta}\|+2\eps \bar{\delta} \|\tilde{x}_\Delta\|\|\bar{\omega}\| \\
		\leq&-\|\tilde{x}_\Delta\|^2+\|Z\bar{B}F\|^2 \|e_{\Delta}\|^2+\eps^2 \bar{\delta} ^2 \|\bar{\omega}\|^2
	\end{align*}
	for any $\eps\in(0,\eps_2]$ with $\eps_2\in(0,\eps^*]$ satisfying
	\[
	2\eps_2 \|Z\bar{\Gamma}\|<1,
	\]
	and $\bar{\delta}\geq \|(\Pi\otimes (Z\Delta))\bar{E}\|$.
	
	Thus, let
	\[
	V=2\|Z\bar{B}F\|^2\|Q\Delta\bar{\Delta}^{-1}\|^2{V}_{\bar{e}_{\Delta}}+\|Z\bar{B}F\|^2V_{e_{\Delta}}+V_{\tilde{x}_\Delta}
	\]
	then we have
	\begin{align}
		\nonumber\dot{V}\leq& -\|Z\bar{B}F\|^2\|Q\Delta\bar{\Delta}^{-1}\|^2\|\bar{e}_{\Delta}\|^2+4\eps^2 \tilde{\alpha}^2\|\bar{\omega}\|^2\\
		\nonumber&-\|Z\bar{B}F\|^2\|e_{\Delta}\|^2+\|Z\bar{B}F\|^2\|Q\Delta\bar{\Delta}^{-1}\|^2 \|\bar{e}_{\Delta}\|^2+\eps^2 \tilde{\beta}^2 \|\bar{\omega}\|^2\\
		\nonumber&-\|\tilde{x}_\Delta\|^2+\|Z\bar{B}F\|^2 \|e_{\Delta}\|^2+\eps^2 \bar{\delta}^2 \|\bar{\omega}\|^2\\
		=&-\|\tilde{x}_\Delta\|^2+\eps^2 \gamma^2\|\bar{\omega}\|^2\label{bound}
	\end{align}
	for any $\eps\in(0,\bar\eps]$ with $\bar\eps=\min(\eps_1,\eps_2)$, suitable $\tilde{\alpha}$ and $\tilde{\beta}$, and $\gamma=\sqrt{4\tilde{\alpha}^2+\tilde{\beta}^2+\bar{\delta}^2}$.

	Since $\tilde{y}=(I\otimes \bar{C})\tilde{x}_{\Delta}$ and $\|\bar{C}\|=1$, we have $\|\tilde{y}\|\leq \|\tilde{x}_{\Delta}\|$. Then, we obtain
	\[
	\dot{V}+\|\tilde{y}\|^2-\eps^2 \gamma^2\|\bar{\omega}\|^2\leq 0
	\]
	Thus we have
	\[
	\|T_{\bar{\omega}\tilde{y}}\|_{H_\infty}\leq \eps \gamma
	\]
	i.e.
	\[
	\|T_{\bar{\omega}(y_i-y_j)}\|_{H_\infty}\leq \eps {\gamma}
	\]
	with $0<\eps <\bar\eps$ and $\gamma>0$. 
\end{proof}

\begin{remark}
	Following the instruction given in \cite[Appendix B]{yang-saberi-stoorvogel-grip-journal}, the compensator design includes three steps:
	\begin{enumerate}
		\item Squaring down precompensator;
		\item Rank-equalizing precompensator;
		\item Observer-based pre-feedback.
	\end{enumerate}
	where the design procedures for step 1 and 2 were developed in \cite{sannuti-saberi} and \cite{sks,sannuti-saberi-zhang-auto}, respectively. It is worth noting that the process of designing precompensators does not introduce zeros in the closed right half plane.
\end{remark}

\section{Scale-free $H_\infty$ almost regulated output synchronization}

According to Figure \ref{Architecture-reg}, the design for scale-free $H_{\infty}$ almost regulated output synchronization includes two major modules. First, we design precompensators for homogenizing agents, then we design collaborative protocols for the compensated agents to achieve $H_{\infty}$ almost regulated output synchronization. The design consists of the following steps.

\hspace{3mm} \textbf{Step I: remodeling the exosystem and choosing target model} 

Following the design procedure in \cite[Appendix C]{yang-saberi-stoorvogel-grip-journal} we remodel the exosystem \eqref{exo} to arrive at a suitable choice for the target model. There exists another exosystem given by
\begin{equation}\label{exo-2}
\begin{system*}{cl}
\dot{\check{x}}_r&=\check{A}_r\check{x}_r, \quad \check{x}_r(0)=\check{x}_{r0}\\
y_r&=\check{C}_r\check{x}_r,
\end{system*}
\end{equation}
such that for all $x_{r0} \in \mathbb{R}^r$, there exists $\check{x}_{r0}\in \mathbb{R}^{\check{r}}$ for which \eqref{exo-2} generate exactly the same output $y_r$ as the original exosystem \eqref{exo}. Furthermore, we can find a matrix $\check{B}_r$ such that the triple $(\check{C}_r,\check{A}_r,\check{B}_r)$ is invertible, of uniform rank $n_q$, and has no invariant zero, where $n_q$ is an integer greater than or equal to maximal order of infinite zeros of $(C_i,A_i,B_i), i\in \{1,...,N\}$ and all the observability indices of $(C_r, A_r)$. Note that the eigenvalues of $\check{A}_r$ consists of all eigenvalues of $A_r$ and additional zero eigenvalues as such eigenvalues of $\check{A}_r$ are in the closed left have plane. We choose our target model as $(\check{C}_r,\check{A}_r,\check{B}_r)$.

\begin{remark}\label{remark-ABC-exo}
Similar to Remark \ref{remark-ABC}, we assume the $(\check{C}_r,\check{A}_r,\check{B}_r)$ has the following form.
	\[
	\begin{system*}{cll}
	\check{A}_r&=\bar{A}+\bar{B}\Gamma, \bar{A}:=\begin{pmatrix}
	0&I_{p(n_q-1)}\\0&0	\end{pmatrix}, \check{B}_r=\bar{B}:=\begin{pmatrix}
	0\\I_p
	\end{pmatrix},\\
	\check{C}_r&=\bar{C}:=\begin{pmatrix}
	I_p&0
	\end{pmatrix}.
	\end{system*}
	\]
\end{remark}

\hspace{3mm} \textbf{Step II: designing precompensators}  

Following the instruction given in \cite[Appendix B]{yang-saberi-stoorvogel-grip-journal}, we use the following precompensator for each agent of MAS \eqref{homst-agent-model} to homogenize the agents to the target model chosen in the previous step.
\begin{tcolorbox}[colback=white]
	\begin{equation}\label{precom2}
		\begin{system*}{cl}
			\dot{p}_i&=G_ip_i+H_{1i}v_i+H_{2i}z_{i}, \\
			u_i&=Q_ip_i+R_{1i}v_i+R_{2i}z_{i},
		\end{system*}
	\end{equation}
	where $v_i$ is the input of the precompensator.
\end{tcolorbox}

We obtain the compensated agents by combining \eqref{homst-agent-model} and \eqref{precom2} as
\begin{equation}\label{comMAS2}
	\begin{system*}{cl}
		\dot{\bar{x}}_i&=\bar{A}\bar{x}_i+\bar{B}(v_i+\Gamma\bar{x}_i)+\bar{E}_i\bar{\omega}_i \\
		y_i&=\bar{C}\bar{x}_i.
	\end{system*}
\end{equation}

\hspace{3mm} \textbf{Step III: designing collaborative protocols for the compensated agents} Finally, the following linear dynamic protocol is designed for the compensated agents \eqref{comMAS2} as
\begin{tcolorbox}[breakable,colback=white]
	\begin{equation}\label{pro-lin-partial2}
		\begin{system}{cll}
			\dot{\hat{x}}_i&=& \bar{A}\hat{x}_i-\eps^{-n_q}\bar{B}F\Delta\hat{\zeta}_{i}+\bar{B}\Gamma\hat{x}_i\\
			&&\hspace{2cm}+\eps^{-1}K(\bar{\zeta}_{i}-C\hat{x}_i)+\iota_i \bar{B}v_i\\
			\dot{\chi}_i &=& \bar{A}\chi_i+\bar{B}v_i+\bar{B}\Gamma\chi_i+\eps^{-1}(\hat{x}_i-\hat{\zeta}_{i})-\eps^{-1}\iota_i\chi_i\\
			v_i &=& -\eps^{-n_q}F\Delta\chi_i,
		\end{system}
	\end{equation}
	where $\eps\in(0,\eps^*]$ is the tuning parameter. We define the scaling matrix $\Delta=\diag\{I_p,\eps I_p,\hdots, \eps^{n_q-1}I_p\}\in\mathbb{R}^{pn_q \times pn_q}$ and we choose matrix $F$ such that $\bar{A}-\bar{B}F$ is Hurwitz stable. Meanwhile, we partition $\bar{A}, \bar{B}, \Gamma$ as
	\[
	\bar{A}=\begin{pmatrix}
		0_{p\times p}&\bar{C}_1\\0_{p(n_q-1)\times p}&\bar{A}_1
	\end{pmatrix}, \bar{B}=\begin{pmatrix}
		0_{p\times p}\\\bar{B}_1
	\end{pmatrix}, \Gamma=\begin{pmatrix}\Gamma_1&\Gamma_2\end{pmatrix}
	\]
	with $\bar{C}_1=\begin{pmatrix}
		I_p&0&\cdots&0
	\end{pmatrix}$.
	We  choose $K=\begin{pmatrix}
		K_1\\K_2K_1
	\end{pmatrix}$
	with $K_2$ such that $\bar{A}_1+\bar{B}_1\Gamma_2-K_2\bar{C}_1$ is Hurwitz stable. And then let $K_1=K_1\T>\frac{\alpha}{2}I\in \R^{p\times p}$, where the value of $\alpha$ depends on $K_2$ and is given explicitly in the proof of Theorem \ref{sstem1}.
	
	In this protocol, each agent communicate $\chi_i$ with its neighbors, (i.e., $\xi_i=\chi_i$). Therefore, each agent has access to the localized information exchange appeared in \eqref{pro-lin-partial2} represented by \eqref{add_1},
	which can be simply obtained by $\xi_i=\chi_i$ in \eqref{eqa1}.
\end{tcolorbox}
We have the following theorem.

\begin{theorem}\label{aros1}
	Consider a MAS described by \eqref{homst-agent-model} and  \eqref{zetabar2}, and the exosystem \eqref{exo} satisfying Assumption \ref{ass2} and \ref{ass1}. Let $\mathbb{G}^N_\mathscr{C}$ be the set of network graphs as
	defined in Definition \ref{def_rootset}. Then, the scalable $H_\infty$-AROSWLIE problem as stated in Problem \ref{prob-reg} is solvable. More specifically,
	\begin{enumerate}
		\item  in the absence of the disturbance $\omega$ protocol \eqref{pro-lin-partial2} and \eqref{precom2} achieves scalable regulated output synchronization \eqref{synch_org}, for any graph $\mathcal{G}\in \mathbb{G}^N_\mathscr{C}$ with any number of agents $N$, and for any $\eps\in(0,\eps^*]$ with $\eps^*<1$ where the value of $\eps^*$ only depends on $\bar{A},\bar{B},\Gamma, F$,
		\item in the presence of the disturbance
		$\omega$, for any $\gamma>0$, the $H_\infty$ norm from $\omega$ to
		$y_i-y_r$ is less that $\gamma$ for all $i\in\{1,\hdots, N\}$ by choosing $\eps$ sufficiently small.
	\end{enumerate}
\end{theorem}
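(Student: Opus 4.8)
The plan is to follow the same four-step architecture used in the proof of Theorem \ref{sstem1}, with the reduced Laplacian $\hat{L}$ replaced throughout by the expanded Laplacian $\bar{L}=L+\diag\{\iota_i\}$, and with all errors measured relative to the remodeled reference $\check{x}_r$ rather than relative to the $N$-th agent. The two structural facts that make this substitution work are already in hand: after Step~I the target model $(\check{C}_r,\check{A}_r,\check{B}_r)$ has exactly the form of Remark \ref{remark-ABC-exo}, so the reference behaves as a virtual homogenized agent driving the network through $\bar{\zeta}_i$; and, as noted after \eqref{zetabar2}, $\bar{L}$ is invertible with all eigenvalues in the open right-half plane whenever $\mathcal{G}\in\mathbb{G}^N_\mathscr{C}$.

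First I would carry out the homogenization (Step~1) exactly as before, using the precompensator \eqref{precom2} from \cite[Appendix B]{yang-saberi-stoorvogel-grip-journal} to bring each agent to the compensated form \eqref{comMAS2}. Then, in Step~2, rather than forming inter-agent differences I would set $\tilde{x}_i=\bar{x}_i-\check{x}_r$, $\tilde{y}_i=y_i-y_r$, and likewise $\tilde{\hat{x}}_i$, $\tilde{\chi}_i$, so that now all $N$ error states are retained (no quotient by a reference agent). The two extra terms in \eqref{pro-lin-partial2}, namely $\iota_i\bar{B}v_i$ in the $\hat{x}_i$-dynamics and $-\eps^{-1}\iota_i\chi_i$ in the $\chi_i$-dynamics, are precisely what is needed so that, after subtracting $\dot{\check{x}}_r=(\bar{A}+\bar{B}\Gamma)\check{x}_r$ and using $\bar{\zeta}_i=\sum_j\bar{\ell}_{ij}(y_j-y_r)=(\bar{L}\otimes\bar{C})\tilde{x}$, the stacked error system closes up with $\bar{L}$ occupying the place held by $\hat{L}$ in \eqref{systmat}--\eqref{wd2}. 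Introducing $e=\tilde{x}-\tilde{\chi}$ and $\bar{e}=(\bar{L}\otimes I)\tilde{x}-\tilde{\hat{x}}$ and applying the same scaling $\tilde{x}_\Delta=(I\otimes\Delta)\tilde{x}$, $e_\Delta=(I\otimes\Delta)e$, $\bar{e}_\Delta=(I\otimes\bar{\Delta})\bar{e}$ should yield the exact analogs of \eqref{wd1}--\eqref{wd2}, the only changes being $\hat{L}\mapsto\bar{L}$ and a block-diagonal disturbance input matrix, since the reference carries no disturbance.

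For Step~3 I would set $\bar{\omega}=0$ and reuse verbatim the Lyapunov function $V_{\bar{e}}$ for the observer error, whose system matrix $\bar{A}_{\bar{e}}$ is unchanged; this gives $\dot{V}_{\bar{e}}<0$ for $\eps\in(0,1]$ exactly as before. The problem then reduces to the stability of $\bar{A}-\bar{B}F+\eps\bar{\Gamma}$ and of $I\otimes\bar{A}-\bar{L}\otimes I+\eps(I\otimes\bar{\Gamma})$. The first is Hurwitz for $\eps<\eps^*$ since $\bar{A}-\bar{B}F$ is Hurwitz by design. For the second, after transforming back to $I\otimes A_d-\eps^{-1}\bar{L}\otimes I$, the crucial observation is that $I\otimes A_d-\bar{L}\otimes I$ is Hurwitz: its eigenvalues are $\lambda(A_d)-\mu(\bar{L})$, and since every eigenvalue of $A_d$ lies in the closed left-half plane while every eigenvalue of $\bar{L}$ lies in the open right-half plane, each such number has strictly negative real part. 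This is the same spectral mechanism invoked in \cite[Theorem 1]{liu-saberi-stoorvogel-donya-almost-automatica}, and here it is in fact cleaner, because $\bar{L}$ is already invertible so there is no zero eigenvalue to be quotiented out.

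Finally, Step~4 is identical in form to the disturbance analysis of Theorem \ref{sstem1}: using the composite Lyapunov function $V=2\|Z\bar{B}F\|^2\|Q\Delta\bar{\Delta}^{-1}\|^2 V_{\bar{e}_\Delta}+\|Z\bar{B}F\|^2 V_{e_\Delta}+V_{\tilde{x}_\Delta}$ I expect the estimate $\dot{V}\le -\|\tilde{x}_\Delta\|^2+\eps^2\gamma^2\|\bar{\omega}\|^2$, whence $\|T_{\bar{\omega}\tilde{y}}\|_{H_\infty}\le\eps\gamma$ and therefore $\|T_{\bar{\omega}(y_i-y_r)}\|_{H_\infty}\le\eps\gamma$, which is driven below any prescribed level by taking $\eps$ small. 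The main obstacle I anticipate is bookkeeping rather than conceptual: one must verify that the $\iota_i$-dependent correction terms in \eqref{pro-lin-partial2} assemble into exactly the $\bar{L}$-structured error system, paying particular attention to the root-set rows where $\iota_i=1$, so that the clean spectral argument for $I\otimes A_d-\bar{L}\otimes I$ applies. Once that assembly is confirmed, the remaining Lyapunov estimates transfer essentially line-for-line from the proof of Theorem \ref{sstem1}.
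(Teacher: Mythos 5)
Your proposal is correct and follows exactly the paper's route: the paper's own (one-sentence) proof of Theorem \ref{aros1} is precisely the reduction you describe, namely rerunning the four-step argument of Theorem \ref{sstem1} with $\hat{L}$ replaced by the expanded Laplacian $\bar{L}$ and with errors formed against the remodeled exosystem state $\check{x}_r$. Your elaboration of the details the paper leaves implicit --- the $\iota_i$-terms in \eqref{pro-lin-partial2} converting the $L$-structured exchange into $\bar{L}$-structured terms, the Hurwitz property of $I\otimes A_d-\bar{L}\otimes I$ from $\re\lambda(A_d)\leq 0$ and $\re\mu(\bar{L})>0$, and the disturbance-free reference giving a block-diagonal input matrix --- is sound and consistent with the paper's intent.
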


\begin{proof}
		Similar to the proof of Theorem \ref{sstem1}, we just replace the $\hat{L}$ with the expanded Laplacian matrix $\bar{L}$ for error system $x_i-x_r$. Then, we obtain the $H_\infty$ almost regulated output synchronization result with arbitrary $H_\infty$ norm from $\bar{\omega}$ to $y_i-y_r$.
\end{proof}

\begin{remark}
	It is worth noting that the formulations in Problem \ref{prob4} and \ref{prob-reg} do not specify how the matrices of the protocols should evolve with parameter $\eps$. However, our designs, as given in Protocol \eqref{pro-lin-partial} and \eqref{pro-lin-partial2}, provide an explicit solution from which the protocol matrices can be derived. The structure of the protocols are independent of the parameter $\eps$; thus, one may develop the structure at one stage and tune the parameter $\eps$ later so as to obtain the desired degree of accuracy. Due to continuity in $\eps$, tuning may be even carried out online. Hence, the method is a one-shot design and is not iterative.
\end{remark}

\section{Simulation Results}
In this section, we will illustrate the effectiveness of our protocols with numerical examples for $H_\infty$ almost output synchronization of heterogeneous MAS. We show that our one-shot-designed protocol \eqref{pro-lin-partial} is scale-free and works for any MAS with any communication graph and any number of agents. Consider the agents models \eqref{homst-agent-model} with
\begin{equation*}
A_i=\begin{pmatrix}
0&1&0&0\\0&0&1&0\\0&0&0&1\\0&0&0&0
\end{pmatrix}, B_i=\begin{pmatrix}
0&1\\0&0\\1&0\\0&1
\end{pmatrix}, C_i\T=\begin{pmatrix}
1\\0\\0\\0
\end{pmatrix},E_i=\begin{pmatrix}
1\\1\\0\\0
\end{pmatrix},\\
\end{equation*}
and $C^m_i=I$, for $i=1,6$, and
\begin{equation*}
A_i=\begin{pmatrix}
0&1&0\\0&0&1\\0&0&0
\end{pmatrix},B_i=\begin{pmatrix}
0\\0\\1
\end{pmatrix},C_i\T=\begin{pmatrix}
1\\0\\0
\end{pmatrix},E_i=\begin{pmatrix}
1\\1\\0
\end{pmatrix},
C^m_i=I,
\end{equation*}
for $i=2,7$, and
\begin{equation*}
A_i=\begin{pmatrix}
-1&0&0&-1&0\\0&0&1&1&0\\0&1&-1&1&0\\0&0&0&1&1\\-1&1&0&1&1
\end{pmatrix},B_i=\begin{pmatrix}
0&0\\0&0\\0&1\\0&0\\1&0
\end{pmatrix},C_i\T=\begin{pmatrix}
0\\0\\0\\1\\0
\end{pmatrix},
\end{equation*}
and $E_i=\begin{pmatrix}
1&1&0&0&0
\end{pmatrix}\T$, $C^m_i=I,$ for $i=3,4,8,9$. Finally
\begin{equation*}
A_i=\begin{pmatrix}
0&1&0\\0&0&1\\1&1&0
\end{pmatrix},B_i=\begin{pmatrix}
0\\0\\1
\end{pmatrix},C_i\T=\begin{pmatrix}
1\\0\\0
\end{pmatrix},E_i=\begin{pmatrix}
1\\0\\0
\end{pmatrix},
C^m_i=I,
\end{equation*}
for $i=5,10$. The agents are subject to disturbances $\omega_i=sin(t)$, for $i=2,5,7,10$, $\omega_i=sin(2t)$ for $i=3,4,8,9$, and $\|\omega_i\|\leq5$ for $i=1,6$ which is a random number normally distributed.

To show the scalability of our protocols, we consider two heterogeneous MAS with different number of agents and different communication topologies.

\textbf{Case $1$:} Consider a MAS with $4$ agents with agent models $(C_i, A_i, B_i)$ for $i \in\{ 1,\hdots,4\}$, and directed communication topology with adjacency matrix $\mathcal{A}_1$ where $a_{14}=a_{21}=a_{31}=a_{42}=1$.

\textbf{Case $2$:} Next, consider a MAS with $10$ agents with agent models $(C_i, A_i, B_i)$ for $i \in \{1,\hdots,10\}$ and directed communication topology with adjacency matrix $\mathcal{A}_2$, where $a_{21}=a_{5,10}=a_{32}=a_{43}=a_{54}=a_{65}=a_{76}=a_{87}=a_{98}=a_{10,9}=a_{15}=1$.

Note that for both cases $p=1$ and $\bar{n}_d=3$, which is the degree of infinite zeros of $(C_2,A_2,B_2)$. It is obtained that $n_q=3$. We choose the target model as
\[
A_d=\begin{pmatrix}
0&1&0\\0&0&1\\0&-1&0
\end{pmatrix}, B_d=\begin{pmatrix}0\\0\\1\end{pmatrix}, C_d=\begin{pmatrix}1&0&0\end{pmatrix}.
\] 

By designing precompensators as stated in \emph{step II}, we obtain the compensated agents \eqref{comMAS} with $\Gamma=\begin{pmatrix}
0&-1&0
\end{pmatrix}$. We choose $F=\begin{pmatrix}
30&31&10
\end{pmatrix}$ and $K\T=\begin{pmatrix}
1&1&1
\end{pmatrix}$.

The simulation results presented in Figure \ref{N4} and \ref{N10} show that the protocol design is independent of the communication graph and is scale-free so that we can achieve $H_{\infty}$ almost output synchronization via our one-shot-designed protocol, for any graph with any number of agents. The simulation results also show that by decreasing the value of $\eps$, $H_\infty$ almost output synchronization is achieved with higher degree of accuracy.

\begin{figure}[t]
	\includegraphics[width=9cm, height=5cm]{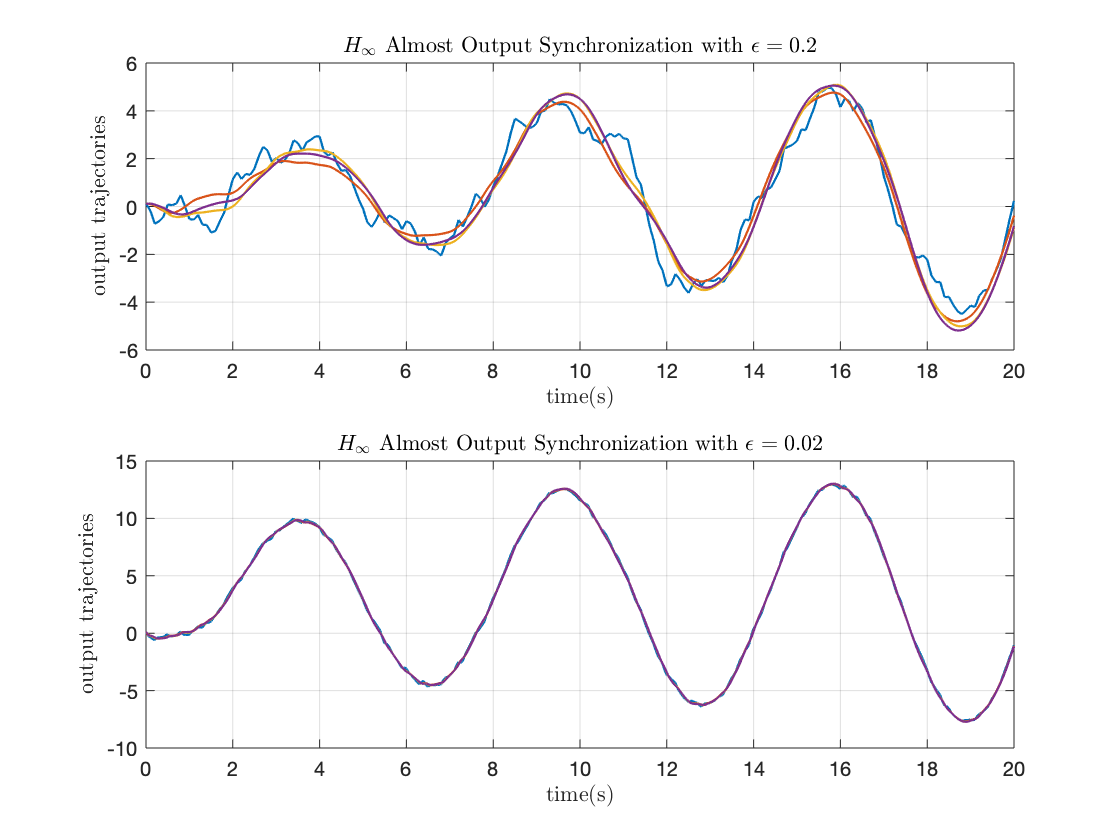}
	\centering
			\vspace{-5mm}
	\caption{$H_{\infty}$ Almost output synchronization for MAS of case $1$ with $N=4$}\label{N4}
	\vspace{-2mm}
\end{figure} 
\begin{figure}[t!]
	\includegraphics[width=9cm, height=5cm]{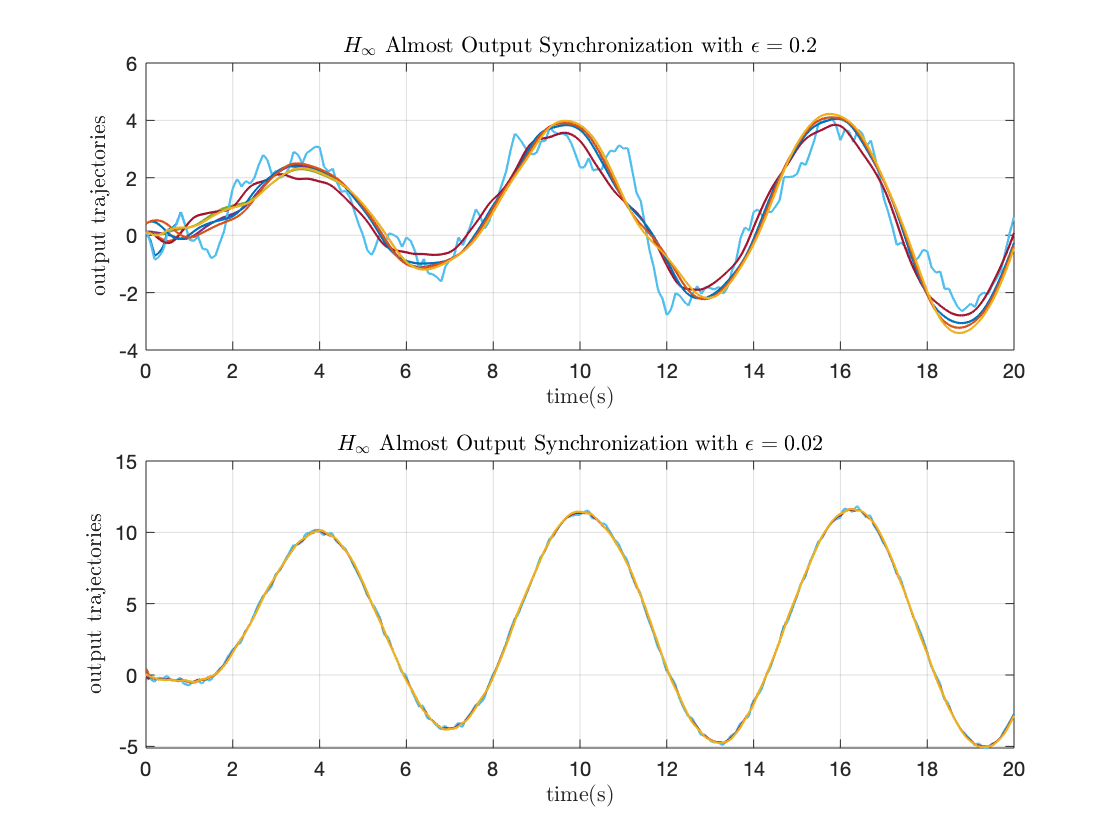}
	\centering
			\vspace{-5mm}
	\caption{$H_{\infty}$ Almost output synchronization for MAS of case $2$ with $N=10$}\label{N10}
		\vspace{-2mm}
\end{figure} 

\bibliographystyle{plain}
\bibliography{referenc}
\end{document}